\title{Sufficient Condition for a Compact Local Minimality of a Lattice}
\author{Laurent B\'{e}termin\thanks{betermin@uni-heidelberg.de}\\ \\
Institut f\"{u}r Angewandte Mathematik, \\
Universit\"{a}t Heidelberg,\\
Im Neuenheimer Feld 294, \\
69120 Heidelberg. Deutschland }
\newtheorem{thm}{THEOREM}[section]
\newtheorem{defi}{Definition}[section]
\theoremstyle{definition}
\newtheorem{remark}[thm]{Remark}
\renewcommand{\textbf}[1]{\begingroup\bfseries\mathversion{bold}#1\endgroup}
\newcommand{\R}{\mathbb R}
\newcommand{\Z}{\mathbb Z}
\newcommand{\N}{\mathbb N}
\numberwithin{equation}{section}
\def\XXint#1#2#3{{\setbox0=\hbox{$#1{#2#3}{\int}$}
    \vcenter{\hbox{$#2#3$}}\kern-.5\wd0}}
\begin{document}
\maketitle
\begin{abstract} We give a sufficient condition on a family of radial parametrized long-range potentials for a compact local minimality of a given $d$-dimensional Bravais lattice for its total energy of interaction created by each potential. This work is widely inspired by the paper of F. Theil about two dimensional crystallization.
\end{abstract} 


\noindent
\textbf{AMS Classification:} Primary 70C20 ; Secondary 82B05, 82B20. \\
\textbf{Keywords:} Lattice energy ; Local minimum ; Potential ; Cauchy-Born rule ; Ground state.  \\
\textbf{Mots-cl\'es} : Energie de r\'eseau ; Minimum local ; Potentiel ; R\`egle de Cauchy-Born ; Etat fondamental. \\

\section{Introduction}
As explained in \cite{Blanc:2015yu}, the crystallization problem, that is to say to understand why the particles structures are periodic at low temperature, is difficult and still open in most cases. Theil exhibited in \cite{Crystal} a radial parametrized long-range potential with the same form as the Lennard-Jones potential such that the triangular lattice is the ground state of the total energy in the sense of thermodynamic limit. This kind of potential, parametrized by a real number $\alpha>0$, is bigger than $\alpha^{-1}$ close to the origin, corresponding to Pauli's principle, has a well centred in $1$ and a $2\alpha$ width, its second derivative at $1$ is strictly positive and its decay  at infinity is $r\mapsto\alpha r^{-7}$. Thus, as small is $\alpha$, as close to $1$ is the mutual distance between nearest neighbours of the ground state configuration and as the interactions between distant points are negligible.\\ \\
In this paper, our idea is to present a family of parametrized potential very close to this one, with the most natural possible assumptions, such that a given Bravais lattice $L$ of $\R^d$ is a $N$-compact local minimum for the total energy of interaction. This kind of local minimality is called ``$N$-compact" because, given a maximal number $N$ of points that we want to move a little bit, there exists a maximal perturbation of the points which gives a larger total energy of interaction, in the sense that the difference of energies is positive. Moreover, as small is the parameter, as large the number $N$ can be chosen. We are strongly inspired by Theil's potential, keeping only local assumptions and strong parametrized decay. Furthermore, our work can be related to that of Torquato et al. about targeted self-assembly \cite{Torquato06,Torquato09} where they search for radial potentials such that a given configuration -- more precisely a part of a lattice -- is a ground state for the total energy of interaction.\\ \\
The aim of this paper is to give a generic construction of the family of potentials without taking into consideration the specific symmetries of $L$ or a particular pertubation of points.\\ \\
After defining the concepts and our parametrized potentials, we give the theorem, its proof and some important remarks and applications.

\section{Preliminaries : Bravais lattice and N-compact local minimality}
\begin{defi}
Let $d\in\N^*$, $(u_1,...,u_d)$ be a basis of $\R^d$ and $L=\bigoplus_{i=1}^n \Z u_i\subset \R^d$ be a Bravais lattice. For any $\lambda>0$, we define $m(\lambda):=\sharp \{L\cap \{\|x\|=\lambda\}  \}$ where $\|.\|$ denote the Euclidean norm and $\sharp A$ is the cardinal of set $A$. Moreover, we call $\lambda_1:=\min\{\|x\|;x\in L^*  \}$, where $L^*=L\backslash \{0\}$ and $\lambda_2:=\min\{\|x\|;\|x\|>\lambda_1,x\in L  \}$.\\
Furthermore, for a Bravais lattice $L\subset \R^d$ and $n>d$, we define the following lattice sums:
\begin{align*}
\zeta^*_L(n):=\sum_{x\in L \atop \|x\|>\lambda_1}\|x\|^{-n} \quad \textnormal{and } \quad \bar{\zeta}_L(n):=\sum_{x\in L \atop \|x\|>\lambda_1} (\|x\|-\lambda_1)^{-n}.
\end{align*}
\end{defi}

\begin{defi}
Let $L\subset \R^d$ be a Bravais lattice, $B\subset L$ a finite subset and $\alpha$ be a real number such that $0<\alpha<\lambda_1/2$. We say that $B^\alpha$ is an $\alpha$-compact perturbation of $B$ if
$$
\forall b\in B, \exists !b^{\alpha}\in B^{\alpha} \text{ such that } \left\|b-b^\alpha\right\|\leq \alpha.
$$
Moreover, if $B^\alpha$ is an $\alpha$-compact perturbation of $B\subset L$,  we write $L^\alpha(B):=(L\backslash B)\cup B^\alpha$ the perturbed lattice.
\end{defi}

\begin{defi}
Let $d\in \N^*$. We say that $V:\R_+^*\to \R$ is a \textbf{d-admissible potential} if $V$ is a $C^2$ function and, for any Bravais lattice $L\subset \R^d$,
$$
\sum_{x\in L^*}|V(\|x\|)| + \sum_{x\in L^*}\|x\||V'(\|x\|)|+\sum_{x\in L^*}\|x\|^2|V''(\|x\|)|<+\infty.
$$
\end{defi}
\begin{remark}
If, for any $k\in\{0,1,2\}$, $|V^{(k)}(r)|=O(r^{-p_k})$, $p_k>d+k$, then $V$ is $d$-admissible.
\end{remark}

\begin{defi}  Let $L$ be a Bravais lattice of $\R^d$ and $V$ a $d$-admissible potential. Let $N\in \N^*$, we say that $L$ is a \textbf{N-compact local minimum for the total $V$-energy} if for any subset $B\subset L$ such that $\sharp B\leq N$, there exists $\alpha_0>0$ such that for any $\alpha\in [0,\alpha_0)$  and any $\alpha$-compact perturbation $B^\alpha$ of $B$,
$$
\displaystyle \Delta_L^\alpha(V;B):=\sum_{b^\alpha\in B^\alpha}\sum_{y\in L^\alpha(B)\atop y\neq b^\alpha}V(\|b^\alpha-y\|)-\sum_{b\in B}\sum_{x\in L \atop x\neq b}V(\|b-x\|)\geq 0.
$$
\end{defi}

\section{Pressure, parametrized potentials and main result}
\begin{defi}
Let $V$ be a $d$-admissible potential and $L\subset \R^d$ be a Bravais lattice, then we define the \textbf{pressure of L submitted to $V$} by :
$$
\mathcal{P}(L,V):=-\sum_{x\in L^*}\|x\| V'(\|x\|).
$$
\end{defi}
\begin{remark}\label{rem}
Actually, as explained for instance in \cite{52871592}, if $\displaystyle E_{V}[L]:=\sum_{x\in L^*}V(\|x\|)$ is the energy per particle of $L$ submitted to $V$, i.e. the free energy at zero temperature, then, by usual thermodynamics formula, we define pressure $P(L,V)$ by
$$
P(L,V):=-\frac{dE_{V}[L]}{d A}=-\frac{1}{2A}\sum_{x\in L^*}\|x\| V '(\|x\|),
$$
where $A$ is the area per particle of $L$, that is to say $\mathcal{P}(L,V)=2A P(L,V)$.
\end{remark}

\begin{defi}\label{Pot} Let $L\subset \R^d$ be a Bravais lattice. We call \textbf{$(L,\theta)-$family} every set of $d$-admissible \textbf{$(L,\theta)-$potentials} $V_{\theta}:\R_+^*\to \R$, indexed by $\theta\in [0,\theta_0]$, $0\leq \theta_0<\lambda_1/2$, and satisfying the following conditions :
\begin{enumerate}
\item \textbf{Small pressure condition:} there exist positive real numbers $C,\mu$ such that for any $\theta$, 
$$
|\mathcal{P}(L,V_\theta)|\leq C\theta^{1+\mu};
$$
\item \textbf{Parametrized fast decay : }$\exists r_0\in[\lambda_1,\lambda_2)$, $\exists \varepsilon>0$, $\exists p>d$ such that for any $r>r_0$, 
$$
\displaystyle |V_\theta''(r)|\leq \theta^{2+\varepsilon}r^{-p-2};
$$
\item \textbf{Local convexity around first neighbours :} it holds $V_\theta''(r)\geq \eta>0$ on a neighborhood of the first distance $\lambda_1$, uniformly on $\theta$.
\end{enumerate}
\end{defi}
\vspace{4mm}
\begin{thm}
Let $L\subset \R^d$ be a Bravais lattice, then for any $N\in \N^*$, there exists $\theta_0>0$ such that for every $(L,\theta)-$family $(V_\theta)_{\theta\leq\theta_0}$ and every $\theta\in[0,\theta_0]$, $L$ is a N-compact local minimum for the total $V_\theta$-energy. Furthermore, in this case, the maximal perturbation $\alpha_0$ can be chosen equal to $\theta$.
\end{thm}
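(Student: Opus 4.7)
The plan is to Taylor-expand $\Delta_L^\alpha(V_\theta;B)$ around the trivial perturbation $u\equiv 0$, using $u_b:=b^\alpha-b$ with $\|u_b\|\le\alpha$, and to split $\Delta_L^\alpha(V_\theta;B)=\Delta^{(1)}+R$ into a linear-in-$u$ part and a second-order remainder. Each pair contribution is Taylor-expanded about its unperturbed distance; collecting the linear terms and using the central symmetry identity $\sum_{y\in L,\,y\neq b}V'_\theta(\|b-y\|)(b-y)/\|b-y\|=0$ for the Bravais lattice $L$ lets me rewrite the sum over $y\in L\setminus B$ as minus the sum over $c\in B\setminus\{b\}$, so that $\Delta^{(1)}$ becomes a finite sum over pair distances realised inside $B$.

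To bound $|\Delta^{(1)}|$ I would convert the small-pressure hypothesis into a pointwise bound on $V'_\theta$. Integrating $|V''_\theta(r)|\le\theta^{2+\varepsilon}r^{-p-2}$ from $r$ to infinity gives $|V'_\theta(r)|\le\frac{\theta^{2+\varepsilon}}{p+1}r^{-p-1}$ for $r>r_0$, and then the identity
\[
m(\lambda_1)\lambda_1 V'_\theta(\lambda_1)=-\mathcal{P}(L,V_\theta)-\sum_{\|x\|>\lambda_1}\|x\|V'_\theta(\|x\|)
\]
combined with $|\mathcal{P}(L,V_\theta)|\le C\theta^{1+\mu}$ yields $|V'_\theta(\lambda_1)|=O(\theta^{1+\mu})$. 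The convergence of $\zeta_L^*(p+1)$ then upgrades this to $\sum_{x\in L^*}|V'_\theta(\|x\|)|=O(\theta^{1+\mu})$, and hence $|\Delta^{(1)}|\le K_N\theta^{1+\mu}\alpha$ for a constant $K_N$ depending on $N$ and $L$.

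For the remainder $R$ I would use the Taylor integral remainder pair by pair and split according to distance. Pairs at distance $\lambda_1$ remain in the convexity region provided $\alpha$ is less than half the width of that neighbourhood, so $V''_\theta\ge\eta>0$ gives a positive semidefinite contribution; summed over all first-neighbour bonds incident to $B$ (including boundary bonds with an endpoint in $L\setminus B$) this assembles into a finite-dimensional quadratic form on $\R^{d\,\sharp B}$, bounded below by $c_N\eta\|u\|^2_{\ell^2}$ via a discrete Korn-type inequality, for a constant $c_N>0$ depending on $N$ and the first-shell geometry of $L$. Pairs at distance $>r_0$ contribute remainders majorised by $\theta^{2+\varepsilon}\bigl(\zeta_L^*(p+2)+\bar\zeta_L(p+2)\bigr)\|u\|^2_{\ell^2}$, absorbed for $\theta_0$ small.

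The main obstacle is the combination step. Combining the two estimates naively gives $\Delta\ge c_N\eta\|u\|^2_{\ell^2}-K_N\sqrt N\theta^{1+\mu}\|u\|_{\ell^2}$, which is negative at the critical scale $\|u\|_{\ell^2}\sim\theta^{1+\mu}$ lying well inside the admissible ball $\|u\|_{\ell^2}\le\theta\sqrt N$. Forcing genuine non-negativity requires exploiting the geometric structure of $\Delta^{(1)}$ beyond a crude $\ell^2$-norm bound: the ``force'' $F_b:=\sum_{c\in B\setminus\{b\}}V'_\theta(\|b-c\|)(b-c)/\|b-c\|$ is of order $\theta^{1+\mu}$ precisely when $b$ has first-neighbours inside $B$, which are exactly the bonds contributing to the coercive quadratic form; the matching must be made quantitative, with explicit $N$-dependent tracking of the Korn constant $c_N$. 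This is the delicate heart of the argument and is what forces $\theta_0$ to be chosen as a function of $N$. The choice $\alpha_0=\theta$ is dictated by this balance: it makes the linear error $K_N\theta^{2+\mu}$ smaller than the quadratic floor $c_N\eta\theta^2$ by the decisive factor $\theta^\mu$.
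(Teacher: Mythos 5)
Your overall strategy --- second-order Taylor expansion, with the small-pressure hypothesis controlling the linear term and the local convexity controlling the quadratic term --- is the same as the paper's, and your treatment of the linear part (the central-symmetry identity reducing it to pairs inside $B$, together with $|V_\theta'(\lambda_1)|=O(\theta^{1+\mu})$ extracted from $\lambda_1 m(\lambda_1)V_\theta'(\lambda_1)=-\mathcal{P}(L,V_\theta)-\sum_{\|x\|>\lambda_1}\|x\|V_\theta'(\|x\|)$ and the decay of $V_\theta''$) is a correct variant of what the paper does with the scalar identity $\sum_{x\neq b_i}\|b_i-x\|V_\theta'(\|b_i-x\|)=-\mathcal{P}(L,V_\theta)$. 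But the proposal stops short of a proof at exactly the step you yourself call ``the delicate heart of the argument'', and that step cannot be closed in the way you describe. First, the coercivity $R\geq c_N\eta\|u\|_{\ell^2}^2$ with $c_N>0$ from first-shell bonds is simply false for a general Bravais lattice: for $L=\Z(1,0)\oplus\Z(0,10)$ the first shell is $\{(\pm1,0)\}$, the matrix $\sum_c e_{bc}e_{bc}^T$ (with $e_{bc}:=(b-c)/\|b-c\|$) is singular, and a displacement of a single point of $B$ in the direction $(0,1)$ lies in the kernel of your quadratic form; no discrete Korn inequality is available without assuming the first shell spans $\R^d$. Second, and more seriously, the negativity of $c_N\eta\|u\|_{\ell^2}^2-K_N\theta^{1+\mu}\|u\|_{\ell^2}$ at the scale $\|u\|_{\ell^2}\sim\theta^{1+\mu}$ is not a bookkeeping problem that a sharper matching removes: if $B$ contains two nearest neighbours $0$ and $v$ and one slides $v$ to $v+s\,v/\lambda_1$, a direct computation gives $\Delta_L^{|s|}(V_\theta;B)=V_\theta'(\lambda_1)\,s+O(s^2)$, and the hypotheses only force $|V_\theta'(\lambda_1)|=O(\theta^{1+\mu})$, not $V_\theta'(\lambda_1)=0$. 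So unless $V_\theta'(\lambda_1)$ happens to vanish, $\Delta$ changes sign for arbitrarily small $|s|$, and no choice of constants in your scheme can produce the non-negativity demanded by the definition (which quantifies over \emph{all} $\alpha\in[0,\alpha_0)$). Your plan therefore has a genuine gap at its decisive point, and the gap is not merely technical.

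For comparison, the paper takes a different route through the quadratic term that avoids any Korn-type inequality: it works with the scalar increments $\alpha_{i,x}=\|b_i^{\alpha_0}-y\|-\|b_i-x\|$, normalises so that $\max_{i,x}|\alpha_{i,x}|=2\alpha_0$, takes the lower bound $2\eta\alpha_0^2$ from the single maximal increment (asserted to occur on a first-neighbour bond), and then evaluates the final inequality only at $\alpha_0=\theta$, where every error term ($N\theta^{2+\varepsilon}$, $N\theta^{2+\mu}$, $N\theta^{3+\varepsilon}$, $N\theta^{4+\varepsilon}$) is $o(\theta^2)$. This is precisely how the paper sidesteps, rather than answers, the ``critical scale'' obstruction you identified: its final estimate is proved at $\alpha_0=\theta$ and does not by itself cover perturbations of size much smaller than $\theta$, and the attribution of the maximal increment to a first-neighbour bond is the analogue of your spanning/Korn issue. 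So your proposal has the merit of making the real difficulty explicit, but it neither resolves it nor reproduces the paper's (shorter, but itself fragile) way around it.
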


\begin{proof} Let $L$ be a Bravais lattice of $\R^d$. Let $N\in\N^*$ and $B:=\{b_1,...,b_N\}\subset L$. Let $\alpha_0$ be such that $0\leq\alpha_0< \lambda_1/2$ and $B^{\alpha_0}=\{b_1^{\alpha_0},...,b_N^{\alpha_0}\}$ be an $\alpha_0$-compact perturbation of $B$. For any $1\leq i \leq N$, for any $y\in L^{\alpha_0}(B)$, $y\neq b_i^{\alpha_0}$, and $x\in L$ such that $\|x-y\|\leq \alpha_0$, we define
$$
\alpha_{i,x}:=\|b_i^{\alpha_0}-y\|-\|b_i-x\|.
$$ 
Obviously, we have, for any $1\leq i\leq N$, for any $0\leq \alpha_0 <\lambda_1/2$ and any $x\in L$,
\begin{equation} \label{alphaestimate}
| \alpha_{i,x}|\leq 2\alpha_0.
\end{equation} 
 We assume, without loss of generality, that $\displaystyle \max_{i,x} |\alpha_{i,x}|=2\alpha_0$, left to decrease $\alpha_0$. We set $\theta\in[0,\lambda_1/2)$ and $V_\theta$ a $(L,\theta)-$potential. We have
$$
\Delta_L^{\alpha_0}(V_\theta;B)=\sum_{i=1}^N\sum_{y\in L^{\alpha_0}(B) \atop y\neq b_i^{\alpha_0}}V_\theta(\|b_i^{\alpha_0}-y\|)-\sum_{i=1}^N\sum_{x\in L \atop x\neq b_i}V_\theta(\|b_i-x\|).
$$
By Taylor expansion, we get, for any $1\leq i\leq N$, for any $y\in L^{\alpha_0}(B)$, $y\neq b_i^{\alpha_0}$, and $x\in L$ such that $\|x-y\|\leq \alpha_0$, 
$$
V_\theta(\|b_i^{\alpha_0}-y\|)\geq V_\theta(\|b_i-x\|)+\alpha_{i,x}V_\theta'(\|b_i-x\|)+\frac{\alpha_{i,x}^2}{2}V_\theta''(\xi_{i,x}),
$$
for suitable $\xi_{i,x}\in (\|b_i-x\|-|\alpha_{i,x}|,\|b_i-x\|+|\alpha_{i,x}|)$. Hence we obtain
$$
\Delta_L^{\alpha_0}(V_\theta;B)\geq \sum_{i=1}^N\sum_{x\in L \atop x\neq b_i}\alpha_{i,x}V_\theta'(\|b_i-x\|)+\frac{1}{2}\sum_{i=1}^N\sum_{x\in L \atop x\neq b_i}\alpha_{i,x}^2V_\theta''(\xi_{i,x}).
$$
We split interactions into two parts : the short range and the long range. For any $1\leq i\leq N$, we set
\vspace{-3mm}
\begin{align*}
\mathcal{S}_L^{i}:=\{ x\in L ; \|x-b_i\|=\lambda_1\}\quad \text{ and } \quad \mathcal{L}_L^{i}:=\{ x\in L ; \|x-b_i\|>\lambda_1 \}.
\end{align*}
Furthermore, as we assume that for all $r>r_0$ $\displaystyle |V_\theta''(r)|\leq \theta^{2+\varepsilon} r^{-p-2}$ and $V_\theta'$ goes to $0$ at $+\infty$, because it is a d-admissible potential, we have, by a simple argument, that  $\displaystyle |V_\theta'(r)|\leq \theta^{2+\varepsilon}r^{-p-1}$ for all $r>r_0$.\\
As $L$ is a Bravais lattice, it holds, for any $i$, $\displaystyle\sum_{x\in L\backslash\{ b_i\}} V_\theta'(\|b_i-x\|)\|b_i-x\|=-\mathcal{P}(L,V_\theta)$. Therefore,
\begin{align*}
\sum_{i=1}^N\sum_{x\in L \atop x\neq b_i}\alpha_{i,x}V_\theta'(\|b_i-x\|)=V_\theta '(\lambda_1)\left(\sum_{i=1}^N\sum_{x\in \mathcal{S}_L^{i}}\|b_i^{\alpha_0}-y\|\right)+ \sum_{i=1}^N\sum_{x\in \mathcal{L}_L^{i}}V_\theta'(\|b_i-x\|)\|b_i^{\alpha_0}-y\|+N\mathcal{P}(L,V_\theta).
\end{align*}
We remark that, writing $\displaystyle \Sigma_0:=\sum_{i=1}^N\sum_{x\in \mathcal{S}_L^{i}}\|b_i^{\alpha_0}-y\|$, and by definition of $\mathcal{P}(L,V_\theta)$,
\begin{align*}
V_\theta '(\lambda_1)\Sigma_0+N\mathcal{P}(L,V_\theta)= \mathcal{P}(L,V_\theta)\left(N-\Sigma_0(\lambda_1 m(\lambda_1))^{-1}  \right)-\left(\sum_{x\in L \atop \|x\|>\lambda_1}\|x\|V'_\theta(\|x\|)\right) \Sigma_0(\lambda_1 m(\lambda_1))^{-1}. 
\end{align*}
By \eqref{alphaestimate}, we have, for any $1\leq i \leq N$, for any $x\in \mathcal{S}_L^i$, $\lambda_1-2\alpha_0 \leq \|b_i^{\alpha_0} -y\|\leq \lambda_1+2\alpha_0$, and we get
$$
m(\lambda_1)N(\lambda_1-2\alpha_0)\leq \Sigma_0\leq m(\lambda_1)N(\lambda_1+2\alpha_0).
$$
Thus, we show that $\displaystyle \mathcal{P}(L,V_\theta)\left(N-\Sigma_0(\lambda_1 m(\lambda_1))^{-1}  \right)\geq -\frac{2 |\mathcal{P}(L,V_\theta)|N\alpha_0}{\lambda_1}\geq -\frac{2C}{\lambda_1}N\theta^{1+\mu}\alpha_0$, by assumption. Therefore, we obtain
\begin{align*}
V_\theta '(\lambda_1)\Sigma_0+N\mathcal{P}(L,V_\theta)\geq -\frac{2C}{\lambda_1}N\theta^{1+\mu}\alpha_0 - \theta^{2+\varepsilon}\zeta^*_L(p)N\left(1+\frac{2\alpha_0}{\lambda_1}  \right).
\end{align*}
Hence, for first order terms, we get
\begin{align*}
\sum_{i=1}^N\sum_{x\in L \atop x\neq b_i}\alpha_{i,x}V_\theta'(\|b_i-x\|)\geq -\frac{2C}{\lambda_1}N\theta^{1+\mu}\alpha_0-2\zeta^*_L(p)N\theta^{2+\varepsilon}-2\left( \frac{\zeta^*_L(p)}{\lambda_1}+\zeta^*_L(p+1) \right)N\theta^{2+\varepsilon}\alpha_0.
\end{align*}
For the second order terms, as  $\displaystyle \max_{i,x} |\alpha_{i,x}|=2\alpha_0$, we have $\displaystyle \sum_{i=1}^N\sum_{x\in\mathcal{S}_L^{i}}\alpha_{i,x}^2 \geq 4\alpha_0^2$, and we obtain
\begin{align*}
\frac{1}{2}\sum_{i=1}^N\sum_{x\in L \atop x\neq b_i}\alpha_{i,x}^2V_\theta''(\xi_{i,x})\geq 2\eta \alpha_0^2 -2\theta^{2+\varepsilon}\alpha_0^2 N\bar{\zeta}_L(p+2).
\end{align*}
Finally we get, for any $0\leq \alpha_0< \lambda_1/2$, 
\begin{align*}
\Delta_L^{\alpha_0}(V_\theta;B)\geq & 2\eta \alpha_0^2-2\theta^{2+\varepsilon}\alpha_0^2 N\bar{\zeta}_L(p+2)-\frac{2C}{\lambda_1}N\theta^{1+\mu}\alpha_0-2\zeta^*_L(p)N\theta^{2+\varepsilon}\\
&-2\left( \frac{\zeta^*_L(p)}{\lambda_1}+\zeta^*_L(p+1) \right)N\theta^{2+\varepsilon}\alpha_0\\
&=2\eta \alpha_0^2 - N\left(A\theta^{2+\varepsilon}+B\theta^{1+\mu}\alpha_0+C\theta^{2+\varepsilon}\alpha_0+D\theta^{2+\varepsilon}\alpha_0^2  \right),
\end{align*}
where positive real numbers $A,B,C,D$ depend only on $L$.
Given $\theta\in [0,\lambda_1/2)$, if $\alpha_0=\theta$, then
\begin{equation}\label{Delta}
\Delta_L^{\theta}(V_\theta;B)\geq  2\eta\theta^2-N\left( A\theta^{2+\varepsilon}+B\theta^{2+\mu}+ C\theta^{3+\varepsilon}+D\theta^{4+\varepsilon} \right).
\end{equation}
As $\eta>0$, there exists $\theta_0 \in[0,\lambda_1/2)$, depending on $N$, sufficiently small such that for any $\theta\in [0,\theta_0]$ and for any $\alpha\in [0,\theta]$, $\Delta_L^\alpha(V_\theta;B)\geq 0$ and then $L$ is a $N$-compact local minimum for the total $V_\theta$-energy, for any $(L,\theta)-$potential of the $(L,\theta)-$family $(V_\theta)_{\theta\leq \theta_0}$.
\end{proof}

\section{Remarks}
\textbf{1. Isothermal compressibility.} It is usual (see \cite{52871592,14163101}) to define the isothermal compressibility $\kappa_T$, from pressure $P$ (see Remark \ref{rem}), by
\begin{align*}
\frac{1}{\kappa_T}:=-A\frac{dP}{dA}=A\frac{d^2E_{V_\theta}[L]}{dA^2}=\frac{1}{4A}\sum_{x\in L^*}\left[\|x\|^2 V_{\theta}''(\|x\|)-\|x\|V_\theta '(\|x\|)  \right],
\end{align*}
where $A$ is the area per particle of $L$ and $E_{V_\theta}[L]$ its energy per point. We know that $\kappa_T> 0$ (see \cite[Section 5.1]{14163101}). Actually, that follows here from assumptions on $V_\theta$, if $\theta$ is sufficiently small. Indeed, by assumptions, we have, for $\theta$ sufficiently small,
$$
\frac{4A}{\kappa_T}=\sum_{x\in L^*}\left[\|x\|^2 V_{\theta}''(\|x\|)-\|x\|V_\theta '(\|x\|)  \right]\geq \lambda_1^2 m(\lambda_1)\eta-\theta^{2+\varepsilon}\zeta^*_L(p)-C\theta^{1+\mu}>0.
$$
\textbf{2. Zero pressure condition and local minimality among dilated of $L$.} Let us assume here that, for any $\theta\in [0,\lambda_1/2)$,
\begin{equation}\label{zeropressure}
\mathcal{P}(L,\theta):=-\sum_{x\in L^*} \|x\| V_\theta '(\|x\|)=0,
\end{equation}
which is thermodynamically natural at zero temperature, for instance if $L$ is the cooling of an ideal gas. 
Now if we consider $f:r\mapsto E_{V_\theta}[rL]$, we get, by \eqref{zeropressure} and $\kappa_T> 0$ (see previous remark), $f'(1)=0$ and $f''(1)> 0$, i.e. \textbf{$L$ is a local minimum of $L\mapsto E_{V_\theta}[L]$ among its dilated}, which seems natural if $L$ is a $N$-compact local minimum for the total $V_\theta$-energy for $N$ arbitrarily large, and this is actually assumed in Theil's paper \cite{Crystal}. \\ \\
However, the reverse is false. A Bravais lattice can be a local minimum among its dilated for the energy per point but not a $N$-compact local minimum for the total energy. For instance, if $d=1$, $L=\Z$, $N=1$ and $V$ defined by $V(r)=0$ for $r\geq 5/2$, $V'(1)=V'(2)=0$, $V''(1)=-1$ and $V''(2)=1/3$. We have $\displaystyle\sum_{x\in \Z^*}|x|V'(|x|)=0$ and $\displaystyle\sum_{x\in\Z^*}|x|^2V''(|x|)=2/3\geq 0$, hence $\Z$ is a local minimum among lattices of the $V$-energy per point. For $\alpha\geq 0$, we estimate, by Taylor expansion,
\begin{align*}
\Delta^{\alpha}(V;L)&=\sum_{x\in\Z^*}\left[V(|x-\alpha|)-V(|x|)  \right]\\
&=\alpha^2V''(1)+\alpha^2V''(2)+\alpha^2\phi(\alpha)=\alpha^2(-2/3+\phi(\alpha)),
\end{align*}
where $\phi(\alpha)$ goes to $0$ as $\alpha\to 0$. Hence for $\alpha<\alpha_0$ sufficiently small, $-2/3+\phi(\alpha)<0$ and $\Z$ is not a $1$-compact local minimum of the total $V$-energy.\\ \\
\textbf{3. Effects of parameters $\varepsilon,\mu,p$ and $\eta$.} By \eqref{Delta}, our assumptions on $V_\theta$ give indications about the stability of lattice $L$:
\begin{itemize}
\item Range : a large $p$, $C$, $\varepsilon$ or $\mu$ allow to take a large perturbation $\alpha_0$ for fixed $N$, i.e. a better decay at infinity implies a stronger stability of the lattice;
\item Second derivative around nearest-neighbours distance : a large $\eta$ also allows a large perturbation $\alpha_0$ for fixed $N$. Typically, a narrow well around $\lambda_1$ ``catches" the first neighbours of the minimizing configuration at distance $\lambda_1$.
\end{itemize}
\textbf{4. Difference between the decay after the first distance and the perturbation.} We can see that $\theta_0^{2+\varepsilon}<<\theta_0$, i.e. the maximum decay is really smaller than the maximum perturbation and this allows not to assume a local behaviour of $V_\theta$ around $\lambda_1$ with respect to $\theta$, as in Theil's work. Obviously, if $\theta=0$ then $V_0(r)=0$ for any $r>r_0$ and $V_0'(\lambda_1)=0$, therefore $\lambda_1$ is a local minimum of $V_0$ and the potential is short-range : only the first neighbours interact and the $N$-compact local minimality is clear for any $N$ with a perturbation $\alpha_0$ as small as $N$ is large. \\ \\
\textbf{5. A kind of Cauchy-Born rule.} Our result can be viewed like a justification of a kind of Cauchy-Born rule (see \cite{TheilOrtner,WeinanEPMing}). Indeed, if we consider a solid as a Bravais lattice $L$ where the inside is a finite part of $L$ with cardinal $N$ and the rest is its boundary, a small perturbation of the inside, depending on $N$, increases the total energy of interaction in the solid. That is to say that the inside of the solid follows its boundary to a stable configuration.\\ \\
\noindent \textbf{Acknowledgements:} I am grateful to Florian Theil, Xavier Blanc and Etienne Sandier for their interest and helpful discussions. I wish to express my thanks to the anonymous referee for her/his suggestions.

\bibliographystyle{plain}
\bibliography{biblio}

\begin{thebibliography}{1}

\bibitem{52871592}
R.~D. Batten, F.~H. Stillinger, and S.~Torquato.
\newblock {Interactions Leading to Disordered Ground States and Unusual
  Low-Temperature Behavior}.
\newblock {\em Physical Review E}, 80, 2009.

\bibitem{Blanc:2015yu}
X.~Blanc and M.~Lewin.
\newblock {The Crystallization Conjecture: A Review}.
\newblock \textit{Preprint}. arXiv:1504.01153, 04 2015.

\bibitem{WeinanEPMing}
W.~E and P.~Ming.
\newblock {Cauchy--Born Rule and the Stability of Crystalline Solids: Static
  Problems}.
\newblock {\em Archive for Rational Mechanics and Analysis}, 183:241--297,
  2007.

\bibitem{TheilOrtner}
C.~Ortner and F.~Theil.
\newblock {Justification of the Cauchy-Born Approximation of Elastodynamics}.
\newblock {\em Archive for Rational Mechanics and Analysis}, 207(3):1025--1073,
  March 2013.

\bibitem{Torquato06}
M.~C. Rechtsman, F.~H. Stillinger, and S.~Torquato.
\newblock {Designed Interaction Potentials via Inverse Methods for
  Self-Assembly}.
\newblock {\em Physical Review E}, 73, 2006.

\bibitem{14163101}
S.~St{\o}len and T.~Grande.
\newblock {\em {Chemical Thermodynamics of Materials}}.
\newblock Wiley, 2004.

\bibitem{Crystal}
F.~Theil.
\newblock {A Proof of Crystallization in Two Dimensions}.
\newblock {\em Communications in Mathematical Physics}, 262(1):209--236, 2006.

\bibitem{Torquato09}
S.~Torquato.
\newblock {Inverse Optimization Techniques for Targeted Self-Assembly}.
\newblock {\em Soft Matter}, 5:1157, 2009.

\end{thebibliography}

\end{document}